%
%

\documentclass[11pt]{amsart}
\usepackage[ruled,vlined,linesnumbered,norelsize]{algorithm2e}
\usepackage{booktabs}
\usepackage{url}

\newtheorem{thm}{Theorem}

\newcommand{\ZZ}{\mathbf{Z}}
\newcommand{\FF}{\mathbf{F}}

\newcommand{\assign}{\leftarrow}
\newcommand{\sfloor}[1]{\lfloor {#1} \rfloor}

\begin{document}

\title{Faster arithmetic for number-theoretic transforms}
\author{David Harvey}

\begin{abstract}
We show how to improve the efficiency of the computation of fast Fourier transforms over $\FF_p$ where $p$ is a word-sized prime. Our main technique is optimisation of the basic arithmetic, in effect decreasing the total number of reductions modulo $p$, by making use of a redundant representation for integers modulo $p$. We give performance results showing a significant improvement over Shoup's NTL library.
\end{abstract}

\maketitle

\section{Introduction}

An important problem in computational number theory and cryptography is the efficient implementation of modular arithmetic. A typical implementation strategy is to represent elements of $\ZZ/N\ZZ$ by residues in a standard interval, such as $[0, N)$ or $[-N/2, N/2)$, and to reduce intermediate results to this interval after each operation in $\ZZ/N\ZZ$, such as addition or multiplication.

In many algorithms one can obtain a substantial constant factor speedup by delaying the modular reduction until after several arithmetic operations have been performed, by taking into account the bit-size of intermediate results. For example, to compute a dot product $\sum_i a_i b_i$, a fundamental operation in linear algebra, one may accumulate terms in batches, using ordinary integer (or even floating-point) arithmetic, and perform the reduction modulo $N$ after each batch \cite{DGP-linalg}.

The aim of this paper is to describe a strategy for reducing the number of modular reductions in the computation of a discrete Fourier transform over a finite field, also known as a \emph{number-theoretic transform} (NTT). The NTT has a vast range of applications; we mention here only fast multiplication of large integers or polynomials \cite{vzGG-compalg}.

For simplicity we restrict to the following situation. Let $\beta$ describe the machine word size, for example $\beta = 2^{32}$ or $\beta = 2^{64}$. We work over $\FF_p = \ZZ/p\ZZ$ where $p$ is a word-sized prime, and we assume that $p = 1 \pmod L$, where $L = 2^\ell$ is the transform length. This ensures that $\FF_p$ contains a primitive $L$-th root of unity; we fix one of these, denoted $\omega$. The NTT is the map $(\FF_p)^L \to (\FF_p)^L$ defined by
 \[ b_j = \sum_{i=0}^{L-1} \omega^{ij} a_i, \qquad 0 \leq j < L. \]
Equivalently, this is the map that evaluates the polynomial $a_0 + a_1 x + \cdots + a_{L-1} x^{L-1}$ at the points $1, \omega, \omega^2, \ldots, \omega^{L-1}$.

It is well known that the fast Fourier transform (FFT) can be used to compute the NTT using $O(L \log L)$ operations in $\FF_p$. For completeness, a simple in-place iterative radix-2 FFT algorithm is shown in Algorithm \ref{algo:fft}.
\begin{algorithm}
\label{algo:fft}
\SetAlgoLined
\DontPrintSemicolon
\KwIn{$\omega \in \FF_p$ and $x = (x_0, \ldots, x_{L-1}) \in (\FF_p)^L$, with $L = 2^\ell$}
\KwOut{DFT of $x$ with respect to $\omega$, in bit-reversed order}
\For{$i \assign 1, 2, \ldots, \ell$}{
  $\zeta \assign \omega^{2^{i-1}}$\;
  $m \assign 2^{\ell-i}$\;
  \For{$0 \leq j < 2^{i-1}$}{
    $t \assign 2jm$\;
    \For{$0 \leq k < m$}{
      $\begin{bmatrix} x_{t+k} \\ x_{t+k+m} \end{bmatrix} \assign
           \begin{bmatrix} x_{t+k} + x_{t+k+m} \\ \zeta^k (x_{t+k} - x_{t+k+m}) \end{bmatrix}$ \nllabel{line:butterfly}\;
    }
  }
}
\caption{Simple FFT}
\end{algorithm}

Our main focus in this paper is on the \emph{butterfly operation} in line \ref{line:butterfly}, which computes $(X, Y) \mapsto (X + Y, W(X - Y))$ for some fixed root of unity $W \in \FF_p$ (a suitable power of $\omega$). At first glance this requires one modular addition, one modular subtraction, and one modular multiplication, and this is how the butterfly is usually implemented. Of course these modular operations are themselves implemented on modern microprocessors using more basic primitives. For example, a modular addition is usually implemented as an ordinary integer addition, followed by a comparison with $p$, followed by a conditional subtraction. In this paper we investigate the butterfly at this lower level, showing how to streamline the implementation to reduce the number of comparisons and conditional operations. Another interpretation is that we have reduced the number of modular reductions.

\section{A typical butterfly implementation}

Victor Shoup's NTL (Number Theory Library) \cite{ntl-5.5.2} is a popular C++ library used in computational number theory. It makes heavy use of the NTT. Its implementation of the butterfly $(X, Y) \mapsto (X + Y, W(X - Y))$ may be expressed by the pseudocode shown in Algorithm \ref{algo:ntl}. It has been simplified to focus attention on the arithmetic aspects, ignoring issues like loop unrolling, software pipelining, and locality. All variables represent register-sized quantities.
\begin{algorithm}
\label{algo:ntl}
\SetAlgoLined
\DontPrintSemicolon
\KwIn{$p < \beta/2$ \newline $0 < W < p$ \newline $W' = \sfloor{W\beta/p}$, $0 < W' < \beta$ \newline $0 \leq X < p$ \newline $0 \leq Y < p$}
\KwOut{$X' = X + Y \pmod p$ \newline $Y' = W(X - Y) \pmod p$}
\medskip
$X' \assign X + Y$ \nllabel{line:ntl-add-1}\;
\lIf{$X' \geq p$}{$X' \assign X' - p$} \nllabel{line:ntl-add-2}\;
\medskip
$T \assign X - Y$ \nllabel{line:ntl-sub-1}\;
\lIf{$T < 0$}{$T \assign T + p$} \nllabel{line:ntl-sub-2}\;
\medskip
$Q \assign \lfloor W'T/\beta \rfloor$ \nllabel{line:ntl-mul-1}\;
$Y' \assign (WT - Qp) \bmod \beta$ \nllabel{line:ntl-mul-2}\;
\lIf{$Y' \geq p$}{$Y' \assign Y' - p$} \nllabel{line:ntl-mul-3}\;
\medskip
\KwRet{$X', Y'$}\;
\caption{NTL butterfly implementation}
\end{algorithm}

Note that $W$ and $W'$ are independent of the data being transformed; they can be precomputed and reused for each transform. A similar comment applies to all the butterfly algorithms considered in this paper.

\begin{thm}
\label{thm:ntl}
Algorithm \ref{algo:ntl} is correct.
\end{thm}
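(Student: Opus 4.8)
The plan is to verify the two outputs separately. The value $X'$ is easy: since $0 \le X, Y < p$ we have $0 \le X + Y < 2p < \beta$, so line \ref{line:ntl-add-1} computes $X + Y$ exactly and line \ref{line:ntl-add-2} reduces it into $[0, p)$; likewise $-p < X - Y < p$, so after lines \ref{line:ntl-sub-1}--\ref{line:ntl-sub-2} the variable $T$ lies in $[0, p)$ and is congruent to $X - Y$ modulo $p$. Everything then comes down to showing that lines \ref{line:ntl-mul-1}--\ref{line:ntl-mul-3} compute $Y' = WT \bmod p$ from a given $T \in [0, p)$.

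The heart of the matter is that $Q = \lfloor W'T/\beta \rfloor$ closely approximates the true quotient $q := \lfloor WT/p \rfloor$. I would write the exact division $W\beta = W'p + s$ with $0 \le s < p$ (consistent with $W' = \lfloor W\beta/p \rfloor$, which is indeed $< \beta$ because $W < p$), substitute to obtain $W'T/\beta = WT/p - sT/(p\beta)$, and bound the error by $0 \le sT/(p\beta) < p^2/(p\beta) = p/\beta < 1/2$. Taking floors in $WT/p - 1/2 < W'T/\beta \le WT/p$ then forces $Q \in \{q-1, q\}$.

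The remainder is bookkeeping on ranges. Setting $r := WT - qp \in [0, p)$, so that $r = WT \bmod p$, we get $WT - Qp = r + (q - Q)p \in \{r,\, r + p\} \subseteq [0, 2p) \subseteq [0, \beta)$; hence the reduction modulo $\beta$ in line \ref{line:ntl-mul-2} is trivial and $Y' = WT - Qp$ at that point. Line \ref{line:ntl-mul-3} then conditionally subtracts $p$, which brings $Y'$ into $[0, p)$ in the case $Q = q - 1$ and leaves it alone when $Q = q$; in both cases $Y' \equiv WT \equiv W(X - Y) \pmod p$ and $0 \le Y' < p$, as required.

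The only genuinely delicate step is the two-sided estimate for $Q$: the upper bound $Q \le q$ is immediate, but the lower bound $Q \ge q - 1$ relies on the error term $sT/(p\beta)$ being strictly less than $1$, and tracing the hypotheses shows this is exactly what $p < \beta/2$ buys us (the same bound also keeps $WT - Qp$ below $\beta$, so that no unintended wraparound occurs in line \ref{line:ntl-mul-2}). I do not expect any further obstacle.
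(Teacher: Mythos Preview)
Your proof is correct and follows essentially the same approach as the paper: both arguments bound the discrepancy between the estimated quotient $Q$ and the true quotient to conclude $0 \le WT - Qp < 2p < \beta$, so that a single correction suffices. The paper does this slightly more directly by combining the two floor inequalities $0 \le W\beta/p - W' < 1$ and $0 \le W'T/\beta - Q < 1$ (scaled by $Tp/\beta$ and $p$) into a single bound on $WT - Qp$, without naming $q$ explicitly, whereas you work via $Q \in \{q-1, q\}$; the content is the same.
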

\begin{proof}
Lines \ref{line:ntl-add-1}--\ref{line:ntl-add-2} compute the sum $X + Y$ and reduce it modulo $p$ to the standard interval $[0, p)$, using the assumption $p < \beta/2$ to avoid overflow in the first line. Lines \ref{line:ntl-sub-1}--\ref{line:ntl-sub-2} compute $T = X - Y \pmod p$ in the same way, assuming that $T$ has a signed type for the comparison.

Lines \ref{line:ntl-mul-1}--\ref{line:ntl-mul-3} compute the product $WT \pmod p$. Line \ref{line:ntl-mul-1} first generates an estimated quotient $Q$. By the definition of $W'$ and $Q$ we have
 \[ 0 \leq \frac{W\beta}p - W' < 1, \qquad 0 \leq \frac{W'T}{\beta} - Q < 1. \]
Multiplying by $Tp/\beta$ and $p$ respectively, and adding, yields
 \[ 0 \leq WT - Qp < \frac{Tp}{\beta} + p < 2p < \beta. \]
In particular, line \ref{line:ntl-mul-2} correctly computes $Y' = WT - Qp$, and the single correction in line \ref{line:ntl-mul-3} suffices to reduce it into $[0, p)$.
\end{proof}

The operations performed by Algorithm \ref{algo:ntl} map comfortably onto modern instruction sets. In line \ref{line:ntl-mul-1}, the expression $\lfloor W'T/\beta \rfloor$ represents the high word of the product $W'T$, which can typically be obtained by a single hardware multiply instruction. Both $WT$ and $Qp$ in line \ref{line:ntl-mul-2} require only the low word of the product. The conditional additions and subtractions in lines \ref{line:ntl-add-2}, \ref{line:ntl-sub-2} and \ref{line:ntl-mul-3} are simple enough that a modern compiler will implement them by a conditional move instruction rather than by a branch.

If we assume that $X$ and $Y$ are distributed uniformly in $[0, p)$, then the first two conditions (lines \ref{line:ntl-add-2} and \ref{line:ntl-sub-2}) hold with probability 50\%. The behaviour of the third condition (line \ref{line:ntl-mul-3}) is more complex. If the quantity $W\beta/p - W'$ considered in the proof of the theorem happens to be close to zero, then the condition will be satisfied with very low probability. At the other extreme, if $W\beta/p - W'$ is close to unity, and if we assume that $p$ is near $\beta/2$, then one can show the condition is satisfied with probability about 25\%. In this case it is still reasonable to prefer a conditional move instruction to a branch.

The modular multiplication algorithm in lines \ref{line:ntl-mul-1}--\ref{line:ntl-mul-3} is attributed to Shoup in \cite{Far-thesis}, but does not seem to have been published. It first appears in NTL version 5.4 in 2005. The use of a suitable precomputed approximation to $W/p$ implies that only a single correction step (line \ref{line:ntl-mul-3}) is necessary, and that the remainder is obtained using only multiplication modulo $\beta$ (line \ref{line:ntl-mul-2}), an advantage on processors that can compute the low word faster than the full product. The latter idea is also used heavily in the division algorithm of \cite{MG-improved}.

In our exposition we assumed for simplicity that $p < \beta/2$, but Niels M\"oller has pointed out (personal communication) that this can be improved. The modular subtraction can be made to work for any $p < \beta$ by replacing the condition $T < 0$ by $X < Y$, or indeed by using the borrow generated by the subtraction $T = X - Y$. The addition can be treated similarly by rewriting it as $X' = X - (p - Y)$. A more careful analysis of the candidate remainder $WT - Qp$ then shows that the entire algorithm works for any $p < \beta/\phi$ where $\phi = \frac12(1 + \sqrt5) \approx 1.618$.

\section{A modified butterfly}

In this section we propose several modifications to Algorithm \ref{algo:ntl}. Our motivation is that the adjustment steps in lines \ref{line:ntl-add-2}, \ref{line:ntl-sub-2} and \ref{line:ntl-mul-3} are relatively expensive on modern microprocessors, compared to hardware integer multipliers, which in recent years have become very fast.

Our first observation is that Shoup's algorithm for computing $WT \pmod p$ does not require that $0 \leq T < p$; in fact the proof given above shows that it works for any $0 \leq T < \beta$. Therefore we may replace lines \ref{line:ntl-sub-1}--\ref{line:ntl-sub-2} by simply $T \assign X - Y + p$, after which we have $0 \leq T < 2p < \beta$, and the algorithm proceeds as before. This simplification is not new, although it does not appear to be well known. It is not used in NTL. It was apparently used by Fabrice Bellard in a computation of $\pi$ to 2.7 trillion decimal places in 2009 (personal communication).

Our second observation appears to be new, and is the main novelty introduced in the present paper. Namely, we may also remove the adjustment in line \ref{line:ntl-mul-3}, provided that throughout the FFT we use a \emph{redundant} representation for elements of $\FF_p$. That is, instead of representing elements of $\FF_p$ by integers in $[0, p)$, we use the wider interval $[0, 2p)$, so each element has two possible representations. For this to work, we must modify the butterfly to accept \emph{inputs} in $[0, 2p)$, and we must impose the stronger condition $p < \beta/4$. Pseudocode for the resulting butterfly is shown in Algorithm \ref{algo:new}.

According to Jason Papadopolous (personal communication), around 1998 Ernst Mayer suggested the use of a redundant representation in the context of a fast Galois transform, i.e.~a Fourier transform over $\FF_{q^2}$ where $q = 2^{61} - 1$. Our new algorithm may be regarded as a generalisation of this idea to the case of an NTT with arbitrary modulus $p$.

\begin{algorithm}
\label{algo:new}
\SetAlgoLined
\DontPrintSemicolon
\KwIn{$p < \beta/4$ \newline $0 < W < p$ \newline $W' = \sfloor{W\beta/p}$, $0 < W' < \beta$ \newline $0 \leq X < 2p$ \newline $0 \leq Y < 2p$}
\KwOut{$X' = X + Y \pmod p$, $0 \leq X' < 2p$ \newline $Y' = W(X - Y) \pmod p$, $0 \leq Y' < 2p$}
\medskip
$X' \assign X + Y$ \nllabel{line:new-add-1}\;
\lIf{$X' \geq 2p$}{$X' \assign X' - 2p$} \nllabel{line:new-add-2}\;
\medskip
$T \assign X - Y + 2p$ \nllabel{line:new-sub}\;
\medskip
$Q \assign \lfloor W'T/\beta \rfloor$ \nllabel{line:new-mul-1}\;
$Y' \assign (WT - Qp) \bmod \beta$ \nllabel{line:new-mul-2}\;
\medskip
\KwRet{$X', Y'$}\;
\caption{Modified Shoup butterfly}
\end{algorithm}

\begin{thm}
\label{thm:new}
Algorithm \ref{algo:new} is correct.
\end{thm}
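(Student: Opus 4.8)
The plan is to mirror the proof of Theorem \ref{thm:ntl}, but to track all bounds carefully so as to accommodate the wider input interval $[0,2p)$ and the removal of the final correction step, with the hypothesis $p < \beta/4$ doing the extra work.

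First I would handle the addition. Since $0 \le X, Y < 2p$ and $p < \beta/4$, the sum satisfies $0 \le X + Y < 4p < \beta$, so line \ref{line:new-add-1} does not overflow. If $X + Y \ge 2p$, then subtracting $2p$ (a multiple of $p$) in line \ref{line:new-add-2} leaves $X'$ congruent to $X + Y$ modulo $p$ and in $[0, 2p)$; otherwise $X'$ already lies in $[0,2p)$. Either way the output specification for $X'$ holds.

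Next, the combined subtraction--multiplication. In line \ref{line:new-sub} we have $-2p < X - Y < 2p$, hence $0 < T < 4p < \beta$; so $T$ fits in a word (forming it as $(X + 2p) - Y$ avoids any intermediate underflow) and $T \equiv X - Y \pmod p$. Now I invoke the observation, already established in the proof of Theorem \ref{thm:ntl}, that Shoup's estimated-quotient multiplication is valid for any $0 \le T < \beta$: the inequalities $0 \le W\beta/p - W' < 1$ and $0 \le W'T/\beta - Q < 1$ combine to give
\[ 0 \le WT - Qp < \frac{Tp}{\beta} + p. \]
The point where $p < \beta/4$ is essential is in bounding the error term: since $T < 4p$ we get $Tp/\beta < 4p^2/\beta < p$, so $0 \le WT - Qp < 2p < \beta/2$. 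Therefore line \ref{line:new-mul-2} recovers $Y' = WT - Qp$ exactly (no wraparound modulo $\beta$), and $Y' \equiv WT \equiv W(X - Y) \pmod p$ with $0 \le Y' < 2p$, as required; the correction step of Algorithm \ref{algo:ntl} is no longer needed.

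I do not expect a genuine obstacle here — the argument is essentially bound-chasing — but the step requiring the most care is verifying that \emph{every} quantity appearing in the algorithm stays in $[0,\beta)$ once inputs range up to $2p$. This is exactly what forces $p < \beta/4$ rather than $p < \beta/2$: we need $4p < \beta$ both to prevent overflow in $X + Y$ and in $X - Y + 2p$, and to keep the quotient-estimate error $Tp/\beta$ below $p$ so that the single implicit reduction built into $WT - Qp$ already lands $Y'$ in $[0, 2p)$. It would be natural to ask, in the spirit of the M\"oller refinement following Theorem \ref{thm:ntl}, whether the constant $4$ can be sharpened, but that is not needed for the statement as given.
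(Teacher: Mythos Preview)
Your argument is correct and follows essentially the same route as the paper's proof: verify that the addition stays below $\beta$ and lands in $[0,2p)$ after a single $2p$-correction, that $T = X - Y + 2p$ lies in $[0,4p) \subset [0,\beta)$, and then reuse the estimate $0 \le WT - Qp < Tp/\beta + p$ from Theorem~\ref{thm:ntl} together with $T < 4p$ and $p < \beta/4$ to conclude $0 \le Y' < 2p$. The paper's version is terser (it simply appeals to the proof of Theorem~\ref{thm:ntl}), while you spell out the bound $Tp/\beta < 4p^2/\beta < p$ explicitly; this is a difference of detail, not of approach.
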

\begin{proof}
Lines \ref{line:new-add-1}--\ref{line:new-add-2} compute a representative for $X + Y \pmod p$ in the interval $[0, 2p)$; this does not overflow as $p < \beta/4$. Line \ref{line:new-sub} computes a representative $T$ for $X - Y \pmod p$ in the interval $[0, 4p)$. Lines \ref{line:new-mul-1}--\ref{line:new-mul-2} compute a representative for $WT \pmod p$ in the interval $[0, 2p)$, in the same way as in the proof of Theorem \ref{thm:ntl}.
\end{proof}

Both outputs $X'$, $Y'$ lie in $[0, 2p)$, ready to be processed by a subsequent butterfly. Depending on the needs of the NTT user, it may be necessary to normalise the final NTT output into the interval $[0, p)$, imposing an additional $O(L)$ cost.

\section{Variants}

In some situations one requires a butterfly of the form $(X, Y) \mapsto (X + WY, X - WY)$. This appears if one switches from a `decimation-in-frequency' transform to a `decimation-in-time' transform. It also appears naturally in the inverse FFT obtained by running Algorithm \ref{algo:fft} backwards. Algorithm \ref{algo:inverse} shows an analogue of Algorithm \ref{algo:new} for this case. The main difference is that elements of $\FF_p$ are represented by an integer in $[0, 4p)$, so each residue has four possible representatives. As in Algorithm \ref{algo:new}, this strategy saves two modular reductions compared to the usual implementation.
\begin{algorithm}
\label{algo:inverse}
\SetAlgoLined
\DontPrintSemicolon
\KwIn{$p < \beta/4$ \newline $0 < W < p$ \newline $W' = \sfloor{W\beta/p}$, $0 < W' < \beta$ \newline $0 \leq X < 4p$ \newline $0 \leq Y < 4p$}
\KwOut{$X' = X + WY \pmod p$, $0 \leq X' < 4p$ \newline $Y' = X - WY \pmod p$, $0 \leq Y' < 4p$}
\medskip
\lIf{$X \geq 2p$}{$X \assign X - 2p$} \nllabel{line:inverse-add}\;
\medskip
$Q \assign \lfloor W'Y/\beta \rfloor$ \nllabel{line:inverse-mul-1}\;
$T \assign (WY - Qp) \bmod \beta$ \nllabel{line:inverse-mul-2}\;
\medskip
$X' \assign X + T$\;
$Y' \assign X - T + 2p$\;
\medskip
\KwRet{$X', Y'$}\;
\caption{Modified inverse butterfly}
\end{algorithm}

\begin{thm}
\label{thm:inverse}
Algorithm \ref{algo:inverse} is correct.
\end{thm}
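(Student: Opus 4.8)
The plan is to mirror the proof of Theorem~\ref{thm:ntl}, tracking for each intermediate quantity both its residue class modulo $p$ and a bound on its size, and to invoke the analysis of Shoup's modular multiplication essentially verbatim.

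First I would dispose of line~\ref{line:inverse-add}. Since $2p \equiv 0 \pmod p$, the conditional subtraction does not change the residue class of $X$, and afterwards $0 \leq X < 2p$; the hypothesis $p < \beta/4$ guarantees $2p < \beta$, so no overflow occurs. Next, for lines~\ref{line:inverse-mul-1}--\ref{line:inverse-mul-2} I would reuse the estimate from the proof of Theorem~\ref{thm:ntl} with the variable $T$ there replaced by $Y$ here; that argument only requires $0 \leq Y < \beta$, which holds because $Y < 4p < \beta$. It yields $0 \leq WY - Qp < Yp/\beta + p$, and since $Y < 4p < \beta$ we have $Yp/\beta < p$, hence $0 \leq WY - Qp < 2p < \beta$. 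Therefore line~\ref{line:inverse-mul-2} in fact computes $T = WY - Qp$ exactly (the reduction modulo $\beta$ is vacuous), with $0 \leq T < 2p$ and $T \equiv WY \pmod p$.

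Finally I would assemble the two outputs. From $0 \leq X < 2p$ and $0 \leq T < 2p$ we get $0 \leq X' = X + T < 4p$, with $X' \equiv X + WY \pmod p$ by construction, so $X'$ is a valid representative in $[0,4p)$. Likewise $-2p < X - T < 2p$ gives $0 < Y' = X - T + 2p < 4p$, while $2p \equiv 0 \pmod p$ gives $Y' \equiv X - WY \pmod p$, so $Y'$ is also valid in $[0,4p)$. A closing sentence would note that every value that passes through a register during the computation---namely $X + 2p$, the product expressions in lines~\ref{line:inverse-mul-1}--\ref{line:inverse-mul-2}, $X + T$, and $X - T + 2p$ reduced modulo $\beta$---lies in $[0, 4p) \subseteq [0, \beta)$, so the arithmetic modulo $\beta$ never wraps around except in the harmless subtraction.

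I do not expect a substantive obstacle: the only mild point to check carefully is that the wider input range $Y \in [0, 4p)$---rather than $[0, p)$ or $[0, 2p)$---is still narrow enough for Shoup's quotient estimate to produce a candidate remainder below $2p$, which it is, with a factor of two to spare; and that the output representation width $[0, 4p)$ matches the input width required by a subsequent butterfly of the same $(X, Y) \mapsto (X + WY, X - WY)$ form, so that the algorithm composes with itself throughout the inverse transform.
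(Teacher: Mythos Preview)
Your proof is correct and follows exactly the same approach as the paper's, only with more of the details spelled out: reduce $X$ to $[0,2p)$, apply the Shoup quotient estimate to $Y\in[0,4p)\subset[0,\beta)$ to get $T\in[0,2p)$, then assemble $X',Y'\in[0,4p)$. One trivial slip: in your closing sentence about register values you mention ``$X+2p$'', but no such quantity is computed in Algorithm~\ref{algo:inverse}; the rest of the overflow check is fine.
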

\begin{proof}
Line \ref{line:inverse-add} reduces $X$ into $[0, 2p)$. Lines \ref{line:inverse-mul-1}--\ref{line:inverse-mul-2} compute $T = WY \pmod p$ in $[0, 2p)$, in the same way as Algorithm \ref{algo:new}. The remaining lines complete the calculation of $X'$ and $Y'$ in $[0, 4p)$.
\end{proof}

The same idea may be applied to butterflies using other modular multiplication algorithms. Algorithm \ref{algo:montgomery} shows a variant using Montgomery multiplication \cite{Mon-modular}. The idea of Montgomery multiplication is in effect to replace the usual Euclidean quotient by a $2$-adic quotient. Compared to Algorithm \ref{algo:new}, this butterfly algorithm has the advantage that only a single value $W'$ must be stored in a table for each root of unity ($W$ is not actually used in the algorithm). On the other hand, one of the multiplications modulo $\beta$ has been replaced by a full product, which may be more expensive on some processors.
\begin{algorithm}
\label{algo:montgomery}
\SetAlgoLined
\DontPrintSemicolon
\KwIn{$p$ odd, $p < \beta/4$ \newline $0 < W < p$ \newline $W' = \beta W \pmod p$, $0 < W' < p$ \newline $J = p^{-1} \pmod \beta$ \newline $0 \leq X < 2p$ \newline $0 \leq Y < 2p$}
\KwOut{$X' = X + Y \pmod p$, $0 \leq X' < 2p$ \newline $Y' = W(X - Y) \pmod p$, $0 \leq Y' < 2p$}
\medskip
$X' \assign X + Y$ \nllabel{line:mont-add-1}\;
\lIf{$X' \geq 2p$}{$X' \assign X' - 2p$} \nllabel{line:mont-add-2}\;
\medskip
$T \assign X - Y + 2p$ \nllabel{line:mont-sub}\;
\medskip
$R_1 \beta + R_0 \assign W' T$ \nllabel{line:mont-mul-1}\;
$Q \assign R_0 J \bmod \beta$ \nllabel{line:mont-mul-2}\;
$H \assign \lfloor Qp/\beta \rfloor$ \nllabel{line:mont-mul-3}\;
$Y' \assign R_1 - H + p$ \nllabel{line:mont-mul-4}\;
\medskip
\KwRet{$X', Y'$}\;
\caption{Modified Montgomery butterfly}
\end{algorithm}

\begin{thm}
\label{thm:montgomery}
Algorithm \ref{algo:montgomery} is correct.
\end{thm}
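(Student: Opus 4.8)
The plan is to follow the pattern of the proofs of Theorems \ref{thm:ntl} and \ref{thm:new}, splitting the analysis into the sum $X'$, the difference $T$, and the Montgomery product $Y'$. Lines \ref{line:mont-add-1}--\ref{line:mont-add-2} are identical to lines \ref{line:new-add-1}--\ref{line:new-add-2}, so the analysis of $X'$ carries over verbatim: since $X + Y < 4p < \beta$ there is no overflow, and the single conditional subtraction of $2p$ leaves $X'$ in $[0, 2p)$ while preserving the residue $X + Y \bmod p$. For line \ref{line:mont-sub}, the bounds $0 \le X, Y < 2p$ give $0 < X - Y + 2p < 4p < \beta$, so $T$ is a nonnegative representative of $X - Y \bmod p$ with $T < \beta$; in particular the full product $W'T$ in line \ref{line:mont-mul-1} is less than $\beta^2$ and hence fits in a double word.

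The core of the argument is that lines \ref{line:mont-mul-1}--\ref{line:mont-mul-4} perform a Montgomery reduction of $W'T$. I would argue as follows. Since $R_0 \equiv W'T \pmod \beta$ and $J \equiv p^{-1} \pmod \beta$, the value $Q = R_0 J \bmod \beta$ satisfies $Qp \equiv W'T \equiv R_0 \pmod \beta$. Because $0 \le R_0 < \beta$, this forces $Qp = H\beta + R_0$ with $H = \lfloor Qp/\beta \rfloor$, and therefore
\[ W'T - Qp = (R_1\beta + R_0) - (H\beta + R_0) = (R_1 - H)\beta . \]
Dividing by $\beta$ and reducing modulo $p$, and using $W' \equiv \beta W \pmod p$, gives $R_1 - H \equiv W'T\beta^{-1} \equiv WT \equiv W(X - Y) \pmod p$. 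Hence $Y' = R_1 - H + p$ represents the required residue, and it only remains to check that it lies in $[0, 2p)$.

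For the size bound, $W' < p$ and $T < 4p$ give $W'T < 4p^2 < \beta p$ (here is where the hypothesis $p < \beta/4$ is used, as opposed to the weaker $p < \beta/2$ of Theorem \ref{thm:ntl}), so $R_1 = \lfloor W'T/\beta \rfloor < p$; and $0 \le Q < \beta$ gives $0 \le H = \lfloor Qp/\beta \rfloor < p$. Consequently $-p < R_1 - H < p$, so $0 < Y' < 2p$ as claimed, and all intermediate quantities stay within a machine word (e.g.\ $R_1 + p < \beta/2$).

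The main obstacle --- really the only delicate point --- is this final size analysis together with confirming that the additive correction in line \ref{line:mont-mul-4} is exactly $+p$: one must pin down $R_1 < p$ and $0 \le H < p$ precisely, and then observe that $R_1 - H$ is the unique integer of $(-p, p)$ in its residue class, so that shifting by $p$ is exactly what is needed to land in $[0, 2p)$. One should also be mildly careful about the sign convention in the Montgomery step: here $Q = R_0 J \bmod \beta$ rather than $-R_0 J \bmod \beta$, which is precisely why line \ref{line:mont-mul-4} subtracts $H$ instead of adding it.
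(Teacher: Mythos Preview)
Your proof is correct and follows essentially the same route as the paper's: both reduce lines \ref{line:mont-add-1}--\ref{line:mont-sub} to the earlier analysis, derive the key identity $W'T - Qp = (R_1 - H)\beta$ from $Qp \equiv R_0 \pmod\beta$, and conclude $R_1 - H \equiv WT \pmod p$ with $-p < R_1 - H < p$. If anything, you are more explicit than the paper in justifying $R_1 < p$ via $W'T < 4p^2 < \beta p$ (which is precisely where $p < \beta/4$ is used) and in bounding $H < p$; the paper simply asserts these.
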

\begin{proof}
Lines \ref{line:mont-add-1}--\ref{line:mont-sub} are identical to the corresponding lines in Algorithm \ref{algo:new}. Line \ref{line:mont-mul-1} computes the product $W'T$, placing the low and high words of the result in $R_0$ and $R_1$ respectively, so that $0 \leq R_1 < p$. Line \ref{line:mont-mul-2} computes $Q = R_0/p \pmod \beta$ (this may be regarded as a $2$-adic approximation to the quotient $W'T/p$). Line \ref{line:mont-mul-3} computes the high word of $Qp$, so that $0 \leq H < p$. We have $Qp = R_0 \pmod \beta$, so $Qp = H\beta + R_0$. Then $W'T - Qp = \beta(R_1 - H)$, and thus $WT = R_1 - H \pmod p$. This agrees modulo $p$ with the value computed for $Y'$ in line \ref{line:mont-mul-4}, which lies in the interval $[0, 2p)$.

(Usually in Montgomery's algorithm, a further comparison and conditional subtraction/addition is performed to normalise the result into $[0, p)$, but we have simply skipped that.)
\end{proof}

\section{Implementation and performance}

The practical benefit (if any) derived from the new algorithms depends heavily on the hardware used. To illustrate what may occur in practice, we performed some timing experiments on several machines: a 3.06GHz Intel Xeon (model X5675, `Westmere' microarchitecture), a 2.6GHz Intel Xeon (model E5-2670, `Sandy Bridge' microarchitecture), and an AMD Vishera (model FX-8350, `Piledriver' microarchitecture), all modern 64-bit processors.

We compared a C implementation of a number-theoretic transform incorporating Algorithm \ref{algo:new}, a similar implementation of a transform based on Algorithm \ref{algo:ntl}, and the FFT routine in NTL itself (version 5.5.2). The code is available from the author's web site. It is moderately optimised: the inner loops are 2-way unrolled, and the last two layers of the FFT have dedicated loops. The values of $W$ and $W'$ are precomputed and stored in a table. Everything was compiled using GCC 4.6.3 on the Intel machines, and GCC 4.7.2 on the AMD machine, with the \texttt{-O2} optimisation flag. We also tried the \texttt{-O3} flag; this made no significant difference to the results. Access to the high word of the product of two 64-bit integers is obtained using the compiler's built-in support for 128-bit integer types; in NTL this is handled via inline assembly macros.

We ran transforms of length $2^{11} = 2048$. This is long enough to avoid too much loop overhead, but short enough that all memory accesses hit the L1 cache, so we may ignore locality problems. For our code we used a 62-bit prime $p$, and for NTL we used a 50-bit prime. NTL supports moduli of only up to 50 bits on a 64-bit machine, for historical reasons related to floating-point arithmetic; however, only integer arithmetic is used in the actual FFT routine.

The NTL FFT routine also performs a bit-reversal of the input array and computes a table of roots of unity on each FFT invocation. To make a fair comparison with our code, which does not perform these steps, we removed their contribution in the timing data shown below.

\begin{table}[h]
\caption{Cycles per butterfly for several implementations}
\begin{tabular}{lrrr}
\toprule
                         & Westmere & Sandy Bridge & Piledriver \\
\midrule
NTL                      & 12.0     & 9.1          &  16.7      \\
Algorithm \ref{algo:ntl} & 10.1     & 9.1          &  13.9      \\
Algorithm \ref{algo:new} & 6.9      & 5.9          &  12.0      \\
\bottomrule
\end{tabular}
\label{tab:basecase}
\end{table}

The table shows the total FFT time (CPU clock cycles) divided by the number of butterflies performed, which is $11 \times 2^{10}$. Note that $O(L)$ of the $O(L \log L)$ butterflies have $W = 1$, and these are faster than the general butterfly with $W \neq 1$. Therefore these figures slightly underestimate the cost of each general butterfly.

Two features of the table are worth pointing out. First, our implementation of Algorithm \ref{algo:ntl} is competitive with NTL. In fact, on two of the three platforms, it is slightly faster. Second, Algorithm \ref{algo:new} outperforms Algorithm \ref{algo:ntl} by a factor of about 1.5 on the Intel machines and about 1.15 on the AMD machine.

The Westmere processor can sustain a maximum throughput of one 64-bit multiplication every 2 cycles \cite{fog}. If we assume that a butterfly requires three such multiplications, then the best we can expect is 6 cycles per butterfly. The reported figure of 6.9 cycles comes fairly close to this; in other words, the multiplier is close to saturated. For Sandy Bridge, the relevant figure is either 1 cycle or 2 cycles per multiplication, depending on whether one of the operands is in a register or fetched from L1 cache. In this case the limiting factor is more likely to be cache bandwidth. Finally, for Piledriver the throughput is 4 cycles per multiplication, so again we are close to saturating the multiplier.

\section*{Acknowledgements}

The author thanks Tommy F\"arnqvist, Torbj\"orn Granlund, Niels M\"oller, Jason Papadopolous and Paul Zimmermann for stimulating conversations on this topic, and Torbj\"orn Granlund for providing access to the AMD machine. The referees provided many helpful comments. The author was partially supported by the Australian Research Council, DECRA Grant DE120101293.

\bibliographystyle{amsplain}
\bibliography{fastntt}

\providecommand{\bysame}{\leavevmode\hbox to3em{\hrulefill}\thinspace}
\providecommand{\MR}{\relax\ifhmode\unskip\space\fi MR }
\providecommand{\MRhref}[2]{%
  \href{http://www.ams.org/mathscinet-getitem?mr=#1}{#2}
}
\providecommand{\href}[2]{#2}
\begin{thebibliography}{1}

\bibitem{DGP-linalg}
Jean-Guillaume Dumas, Thierry Gautier, and Cl{\'e}ment Pernet, \emph{Finite
  field linear algebra subroutines}, Proceedings of the 2002 {I}nternational
  {S}ymposium on {S}ymbolic and {A}lgebraic {C}omputation (New York), ACM,
  2002, pp.~63--74 (electronic). \MR{2035233}

\bibitem{Far-thesis}
Tommy F\"arnqvist, \emph{Number theory meets cache locality --- efficient
  implementation of a small prime {FFT} for the {GNU} {M}ultiple {P}recision
  {A}rithmetic {L}ibrary}, Master's thesis, Stockholm University, 2005,
  \url{http://www.nada.kth.se/utbildning/grukth/exjobb/rapportlistor/2005/rapporter05/farnqvist_tommy_05091.pdf}.

\bibitem{fog}
Agner Fog, \emph{Instruction tables: {L}ists of instruction latencies,
  throughputs and micro-operation breakdowns for {I}ntel, {AMD} and {VIA}
  {CPU}s}, \url{http://www.agner.org/optimize/}.

\bibitem{MG-improved}
Niels Moller and Torbjorn Granlund, \emph{Improved division by invariant
  integers}, IEEE Transactions on Computers \textbf{60} (2011), 165--175.

\bibitem{Mon-modular}
Peter~L. Montgomery, \emph{Modular multiplication without trial division},
  Math. Comp. \textbf{44} (1985), no.~170, 519--521. \MR{777282 (86e:11121)}

\bibitem{ntl-5.5.2}
Victor Shoup, \emph{{NTL}: a library for doing number theory ({V}ersion
  5.5.2)}, \url{http://www.shoup.net/ntl/}.

\bibitem{vzGG-compalg}
Joachim von~zur Gathen and J{\"u}rgen Gerhard, \emph{Modern computer algebra},
  second ed., Cambridge University Press, Cambridge, 2003. \MR{2001757
  (2004g:68202)}

\end{thebibliography}

\end{document}